\DocumentMetadata{}
\documentclass[sigconf]{acmart}
\AtBeginDocument{%
  }

\setcopyright{acmcopyright}
\copyrightyear{2026}
\acmYear{2026}
\acmConference[CIDR '26]{Make sure to enter the correct
  conference title from your rights confirmation email}{January 18--21,
  2026}{Chaminade, USA}




\usepackage{ebproof}
\usepackage{algorithm}
\usepackage{algpseudocode}

\newtheorem{lem}{Lemma}[section]
\newtheorem{thm}{Theorem}[section]


\begin{document}

\title{Towards General-Purpose Data Discovery: \\ A Programming Languages Approach}

\author{Andrew Y. Kang}
\affiliation{%
  \institution{Cornell University}
  \city{Ithaca}
  \state{New York}
  \country{USA}
}
\email{ayk36@cornell.edu}

\author{Yashnil Saha}
\affiliation{%
  \institution{Cornell University}
  \city{Ithaca}
  \state{New York}
  \country{USA}
}
\email{ys2343@cornell.edu}

\author{Sainyam Galhotra}
\affiliation{%
  \institution{Cornell University}
  \city{Ithaca}
  \state{New York}
  \country{USA}
}
\email{sg@cs.cornell.edu}

\renewcommand{\shortauthors}{Kang et al.}

\begin{abstract}
    Efficient and effective data discovery is critical for many modern applications in machine learning and data science. One major bottleneck to the development of a general-purpose data discovery tool is the absence of an expressive formal language, and corresponding implementation, for characterizing and solving generic discovery queries. To this end, we present TQL, a domain-specific language for data discovery well-designed to leverage and exploit the results of programming languages research in both its syntax and semantics. In this paper, we fully and formally characterize the core language through an algebraic model, Imperative Relational Algebra with Types (\texttt{ImpRAT}), and implement a modular proof-of-concept system prototype.
\end{abstract}

\begin{CCSXML}
<ccs2012>
   <concept>
       <concept_id>10002951.10002952.10003197.10010822.10010823</concept_id>
       <concept_desc>Information systems~Structured Query Language</concept_desc>
       <concept_significance>500</concept_significance>
       </concept>
    <concept>
       <concept_id>10011007.10011006.10011050.10011017</concept_id>
       <concept_desc>Software and its engineering~Domain specific languages</concept_desc>
       <concept_significance>500</concept_significance>
       </concept>
   <concept>
       <concept_id>10011007.10011006.10011039.10011311</concept_id>
       <concept_desc>Software and its engineering~Semantics</concept_desc>
       <concept_significance>300</concept_significance>
       </concept>
    <concept>
       <concept_id>10002951.10002952.10003219.10003215</concept_id>
       <concept_desc>Information systems~Extraction, transformation and loading</concept_desc>
       <concept_significance>300</concept_significance>
       </concept>
 </ccs2012>
\end{CCSXML}

\ccsdesc[500]{Information systems~Structured Query Language}
\ccsdesc[500]{Software and its engineering~Domain specific languages}
\ccsdesc[300]{Software and its engineering~Semantics}
\ccsdesc[300]{Information systems~Extraction, transformation and loading}

\keywords{Data Discovery, query languages, query evaluation, programming languages, type system, formal semantics}

\received{5 August 2025}

\maketitle

\section{Introduction}
Modern applications in machine learning and data science rely upon easy access to large quantities of high-quality data. Efficiently and effectively leveraging the resources of very large data repositories is therefore essential for the practical building and scaling of deployable models. However, due to the enormity of dataset search spaces \cite{Hulsebos:hard}, and the substantial variance in dataset quality, successfully realizing  data discovery is often difficult and labor-intensive.

These data discovery challenges are rarely the result of poor tool or system design. Indeed, there is remarkable research on the development and implementation of tools and systems for particular critical scenarios in data discovery \cite{Galhotra:metam, tao:kyrix}. The more fundamental driving difficulty of general-purpose data discovery, in our view, is the absence of a formal and standardized approach for characterizing the dynamic and diverse requests which arise in data science and machine learning applications. 

Our preliminary work~\cite{kang:tql1} tackled this "language problem" in general-purpose data discovery with an inter- and cross-disciplinary approach \cite{wu:acdemic}. We proposed TQL, a data discovery language using the observations that most meaningful data discovery queries could be decomposed into two more primitive classes of queries: transformation clauses, which characterize the manner in which raw data is modified; and constraint clauses, which describe properties that the data is expected to satisfy. Our main insight then, extending philosophically from the Curry-Howard Isomorphism \cite{Wadler:propos}, was to identify that transformation and constraint clauses shared behavior analogous with that of operations and types in a typical programming language --- thus suggesting a natural encoding of the former into the latter. 

We proposed TQL, an easily extensible data discovery language for describing meaningful problems in data discovery, and produced an associated set-theoretic formalization. This paper builds on our initial work by articulating the necessary mathematical foundations (through the algebraic \texttt{ImpRAT} model) to fully formalize the operational semantics of the TQL language, and by introducing and implementing the core architecture for a proof-of-concept prototype solver for language-describable data discovery problems.

\section{Language Overview}
This section overviews the formal syntax and semantics for the TQL language.  We proceed through this section in three parts. We begin by providing some intuition through a review of a set-theoretic model for our approach to general-purpose data discovery: Type-Collected Relational Algebra (TCRA). 
 We then introduce Imperative Relational Algebra with Types (\texttt{ImpRAT}), a largely algebraic language that provides the algorithmic foundations for our data discovery language. 
We finally give the formal semantics for TQL via a translation from the surface syntax of TQL into the algebraic syntax of \texttt{ImpRAT}.

\subsection{Type-Collected Relational Algebra (TCRA)}

We summarize our constructed set-theoretic model for TQL, Type-Collected Relational Algebra (TCRA) \cite{kang:tql1}. This model both provides some basic intuition for our approach, and gives an implicit proof that our formulation of the data discovery problem was both computable and well-defined. 

At a high-level, type-collected relational algebra can be viewed as an extension of traditional relational algebra where relational operations act upon typed dataset variables in addition to explicitly declared datasets. 

These ``dataset variables", representing undetermined datasets in the query statement, are modeled as sets of all known candidate datasets which might be substituted for the variable (i.e. a subset of the finite set $\mathcal{D}$ of all datasets in the data repository). Each dataset variable is "typed" to the extent that it has an associated predicate denoting particular properties that its contained datasets must satisfy.  

Accordingly, relational algebra operations are redefined as their corresponding set-level operations. For example, if $\tau(t_0, t_1)$ is an operation in relational algebra, then the application of $\tau$ to dataset variables $T_1$ and $T_2$ is given by: 
\begin{displaymath}
    \tau(T_1, T_2) \equiv \{ \tau(t_1, t_2): t_1 \in T_1 \land t_2 \in T_2\}
\end{displaymath}
For consistency with this redefined relational operation, explicitly declared datasets $t_0$ should be viewed as a singleton dataset variable (formally, the set $\{ t_0 \}$).

The TCRA model is advantageous insofar as it can capture all valid paths of a data discovery query. That is, if a dataset is in the resultant set (of datasets) of a query, then it must be producible by a certain substitution of the dataset variables.

We now recover a desirable property of TCRA. Specifically, we show that TCRA is relationally complete. 

\begin{lem}[Natural Inclusion] There is a natural inclusion mapping, given by the function $\imath(t) = \{ t \}$, from relational algebra into type-collected relational algebra. 
\end{lem}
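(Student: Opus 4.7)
The plan is to establish two properties of the map $\imath$: first, that it embeds datasets injectively into the universe of dataset variables; and second, that it preserves the action of every relational algebra operation, which is what makes the inclusion \emph{natural} in the algebraic sense.

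First I would verify that $\imath(t) = \{t\}$ is well-defined as a TCRA dataset variable. Since each $t$ in the source relational algebra lies in the repository $\mathcal{D}$, the singleton $\{t\}$ is a subset of $\mathcal{D}$ and hence a valid candidate set; it can be equipped with the trivial type predicate satisfied exactly by $t$. Injectivity of $\imath$ is then immediate, since $\{t_1\} = \{t_2\}$ forces $t_1 = t_2$, so $\imath$ is genuinely an inclusion.

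Next, for each $n$-ary relational operation $\tau$, I would show the homomorphism identity $\imath(\tau(t_1,\ldots,t_n)) = \tau(\imath(t_1),\ldots,\imath(t_n))$, where the right-hand $\tau$ is the lifted set-valued operation from TCRA. This follows directly from the TCRA definition recalled in the excerpt: expanding the lifted operation on singletons gives $\tau(\{t_1\},\ldots,\{t_n\}) = \{\tau(s_1,\ldots,s_n) : s_i \in \{t_i\}\} = \{\tau(t_1,\ldots,t_n)\} = \imath(\tau(t_1,\ldots,t_n))$. A straightforward structural induction on relational algebra expressions then lifts this single-operation preservation to arbitrary compositions, so $\imath$ commutes with every operation expressible in the source algebra.

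The main obstacle I anticipate is less computational than definitional: pinning down precisely what ``natural'' means here. If it is intended in the strict categorical sense, I would need to specify source and target categories (datasets and typed dataset variables, with relational operations as morphisms) and verify the corresponding commuting diagrams, which may require an auxiliary check that the induced type predicate on $\{t\}$ behaves compatibly under each $\tau$. Given the elementary tone of the excerpt, however, I expect the pointwise homomorphism property together with injectivity to suffice, with ``natural'' referring to the fact that $\imath$ is canonical (defined uniformly in $t$, with no auxiliary choices).
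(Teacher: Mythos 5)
Your proposal is correct and follows essentially the same route as the paper: the decisive step in both is evaluating the lifted TCRA operation on singleton inputs, which the paper phrases as closure of TCRA operations on single-element collections. Your version is in fact slightly more explicit, since you compute the output singleton exactly as $\{\tau(t_1,\ldots,t_n)\}$ (the homomorphism identity) and add injectivity, whereas the paper stops at the combinatorial observation that singleton inputs yield a singleton output.
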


\begin{proof}
    It suffices to show that operations in type-collected relational algebra are closed on single-element collections for all tabular values on which they are designed. 
    
    Observe that by definition, each table in the output set of a type-collected relational algebra operation is constructed from selecting a single table from each input set to serve as the input tables for the associated table-level function. As such, it follows combinatorially that if each input set is a singleton, then the output set must also be a singleton. And so, closure holds. 
\end{proof}

\begin{thm}[Relational Completeness]
     Type-Collected Relational Algebra (TCRA) is relationally complete. That is, it is a valid extension of Relational Algebra (RA).
\end{thm}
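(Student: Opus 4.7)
The plan is to derive the theorem as a direct consequence of the Natural Inclusion Lemma, formalizing the sense in which TCRA contains RA as a sub-algebra via the inclusion $\imath$. Concretely, I would argue by structural induction on RA expressions that for any RA expression $E$ built from base relations $t_1, \ldots, t_k$, the corresponding TCRA expression $\hat{E}$, obtained by replacing each $t_i$ with $\imath(t_i)=\{t_i\}$ and interpreting each operation at the set-of-tables level, evaluates to the singleton $\{E\}$.

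First I would handle the base case: any base relation $t$ evaluates in RA to $t$, and $\imath(t) = \{t\}$ evaluates in TCRA to the same singleton. For the inductive step, suppose $E = \tau(E_1, \ldots, E_n)$ where each $E_i$ evaluates in RA to some table $t_i$, and by the inductive hypothesis each $\hat{E_i}$ evaluates to $\{t_i\}$. Applying the TCRA definition of $\tau$ given in the excerpt,
\[
\hat{E} \;=\; \tau(\{t_1\}, \ldots, \{t_n\}) \;=\; \{\tau(s_1, \ldots, s_n) : s_i \in \{t_i\}\} \;=\; \{\tau(t_1, \ldots, t_n)\},
\]
which is exactly the singleton containing the RA evaluation of $E$. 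The crucial step --- that applying a TCRA operation to singletons yields a singleton --- is precisely the closure property established in the Natural Inclusion Lemma. Composing $\imath$ with the obvious recovery map $\{t\}\mapsto t$ then yields a structure-preserving embedding of RA into TCRA, which is what is required for TCRA to be a valid extension.

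The main obstacle is less mathematical than definitional: pinning down what ``relationally complete'' and ``valid extension'' formally require. Once one adopts the natural reading --- that every RA query is faithfully expressible as a TCRA query on singleton dataset variables --- the proof reduces to bookkeeping on top of the lemma. A minor subtlety is that the argument must go through uniformly for unary, binary, and higher-arity RA operations (projection, selection, join, union, set difference, etc.), but the TCRA operation schema presented in the excerpt is already parameterized generically over arity, so the induction carries through without incident.
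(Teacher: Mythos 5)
Your proposal is correct and follows essentially the same route as the paper, which simply cites the Natural Inclusion Lemma and declares the theorem an immediate consequence; you merely spell out the structural induction that the paper leaves implicit. No gap here --- your expansion is, if anything, more careful than the paper's one-line proof.
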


\begin{proof}
    This follows immediately from Lemma 2.1. 
\end{proof}

\subsection{Imperative Relational Algebra with Types (\texttt{ImpRAT})}

For the purposes of language implementation, we now present an alternative, but essentially equivalent, formulation for TQL through the algebraic language of \texttt{ImpRAT}. Unlike our initial set-theoretic formulation, a given expression in \texttt{ImpRAT} does not entirely describe the behavior of a data discovery "program". Instead, a program is given as a tuple $(S, \mathcal{D})$ consisting of both an \texttt{ImpRAT} program $S$, and the set $\mathcal{D}$ of all datasets $D$ in a data repository. A resultant dataset $D'$ for a discovery program $(S, \mathcal{D})$ is defined as any valid (i.e. non-NULL) result from computing the output of program $S$ on a finite tuple of not-necessarily distinct input datasets $D_1, D_2, ..., D_n \in \mathcal{D}$. Formally, if we view program $S$ as a function $S : \mathcal{D}^n \rightarrow \mathcal{D}_* \cup \{ \bot \}$, where $\mathcal{D}_*$ is the closure of $\mathcal{D}$ on all dataset operations in relational algebra, then the set of valid inputs can be characterized as:
\begin{displaymath}
  S^{-1}(\mathcal{D}_*) \equiv \{ (D_1, ..., D_n) \in \mathcal{D}^n : S(D_1, ..., D_n) \neq \bot \}
\end{displaymath}
and the resultant datasets $D'$ can be represented as: 
\begin{displaymath}
  (S \circ S^{-1})(\mathcal{D}_*) \equiv \{ S(D_1, ..., D_n) : (D_1, ..., D_n) \in S^{-1}(\mathcal{D}_*) \}
\end{displaymath}

Intuitively, within this more computational formulation of the data discovery problem, \texttt{ImpRAT} can be understood as a descriptive language for modeling the transformations and constraints which the input datasets and their resultands are expected to satisfy. 

In designing an algebraic language for such a purpose, we model our approach on similar work in the domain of network programming languages \cite{anderson:netkat}. Specifically, we take inspiration from the construction of a Kleene Algebra with Tests \cite{kozen:kat}, embedding simple boolean logic into relational algebra \cite{aho:universality}. This simple extension, which we call "Relational Algebra with Types" (RAT), allows relational algebra to check for desirable constraints on queried datasets. We supplement \texttt{RAT} with a framework of basic imperative constructs \cite{winskel:formal}, creating \texttt{ImpRAT}, to equip the language with a simple control flow that is easy to reason about. 

Syntactically, \texttt{ImpRAT} expressions are formed from three categories: relational expressions, types, and statements. Relational expressions include the typical operations in relational algebra, as well as the "test" operation $R \{ T_0, ..., T_n \}$ which checks the predicate type of a relation (i.e. a dynamically-evaluated type ascription). Types consist of predicate tests on basic properties of datasets, and the symbols of first-order logic. Statements follow the standard structure of a typical imperative language \cite{winskel:formal}, with $\Delta$ corresponding to a "return" operation. A special $!x$ operation allows the user to declare that a variable $x$ is free in program $S$ (i.e. that $x$ requires an input dataset assignment). The syntax for \texttt{ImpRAT} is given in complete detail as a BNF grammar in Figure ~\ref{fig:BNF-RAT}. 

\begin{figure}
    \caption{Syntax for ImpRAT}
    \label{fig:BNF-RAT}
    \begin{tabular}{l}
        \toprule
        \vbox{
            \begin{flalign*}
                R &::= D {\:|\:} x {\:|\:} R_0 \cup R_1 {\:|\:} R_0 - R_1 {\:|\:} R_0 \times R_1 {\:|\:} \bot \\ 
                &{\:|\:} \Pi_{a_1, ..., a_n} (R) {\:|\:} \sigma_\varphi(R) {\:|\:} \rho_{a/b}(R) {\:|\:} R \{ T_0, ..., T_n \} \\
                T &::= a  {\:|\:} \exists \varphi {\:|\:} \forall \varphi \\
                S &::= \top {\:|\:} \bot {\:|\:} x := R {\:|\:} !x {\:|\:} S_0 ; S_1 {\:|\:} \Delta R
            \end{flalign*}
        } \\
        \bottomrule
    \end{tabular}
\end{figure}

Semantically, an \texttt{ImpRAT} expression consists of a series of statements which either evaluate to a successful completion (i.e. evaluate to $\top$ or $\Delta D$), or do not (i.e. evaluate to $\bot$). An \texttt{ImpRAT} expression is considered to have evaluated to successful completion on the whole if and only if there is a statement that evaluates to $\Delta D$ before any statements evaluate to $\bot$. A valid \texttt{ImpRAT} expression is expected to only evaluate to either $\bot$ or $\Delta D$. A correct \texttt{ImpRAT} expression cannot evaluate to $\top$. The assignment ($x := R$) and return ($\Delta R$) statements are the basic building blocks of most meaningful \texttt{ImpRAT} expressions. These statements evaluate to $\top$ or $\Delta D$ respectively if and only if their relational expression $R$ does not evaluate to $\bot$. 

A relational expression $R$ is any legal expression in \texttt{RAT}, which is largely identical to basic relational algebra with the minor addition of a test operation $R \{ T_0, ..., T_n \}$, and the presence of dataset variables $x$, whose values are mapped by the environment $\kappa$. Relational expressions typically evaluate to the NULL character $\bot$ from failing a test operation. We note, however, that relational expressions may also evaluate to $\bot$ if an illegal operation is performed (e.g. attempting to perform a set-union on two datasets which are not union-compatible). For the sake of semantic consistency, we structure the language so that these cases are typically caught by induced instances of the test operation. 

To provide some additional intuition, we detail a few representative small-step semantic rules in both the 'relational expression' and 'statement' categories:

\textbf{UNION ($\cup$)}. The union operator is associated with two semantic rules: 
    \begin{displaymath}
        \begin{split}
        & \begin{prooftree}
            \hypo{$\,$}
            \infer1{\langle D_1 \cup D_2 , \kappa \rangle \rightarrow D_3} 
        \end{prooftree} \text{ \small (w/e $D_3$ is the union of relations)} 
        \\
        & \, \\
        & \begin{prooftree}
            \hypo{$\,$}
            \infer1{\langle D_1 \cup D_2 , \kappa \rangle \rightarrow \bot} 
            \end{prooftree} \text{ \small (if $D_1$ and $D_2$ are not union compatible)}
        \end{split}
    \end{displaymath}
If $D_1$ and $D_2$ are union-compatible (i.e. there is an appropriate matching of their columns), then the first rule applies, and returns a dataset $D_3$ which is equal to the union of $D_1$ and $D_2$ in relational algebra. If $D_1$ and $D_2$ are not union-compatible, then the second rule applies, and returns the NULL character $\bot$. 

\textbf{UNIVERSAL PREDICATE ($\forall \varphi$)}. The test operation for the universal predicate is associated with two semantic rules: 
    \begin{displaymath}
        \begin{split}
        & \begin{prooftree}
         \hypo{$\,$}
        \infer1{\langle D \{ \forall \varphi \} , \kappa \rangle \rightarrow D} 
        \end{prooftree} \text{\small (if all tuples in $D$ satisfy $\varphi$)}
        \\
        & \, \\
        & \begin{prooftree}
        \hypo{$\,$}
        \infer1{\langle D \{ \forall \varphi \} , \kappa \rangle \rightarrow \bot} 
        \end{prooftree} \text{\small (if any tuples in $D$ do not satisfy $\varphi$)} 
        \end{split}
    \end{displaymath}
If all rows in dataset $D$ satisfy the predicate test $\varphi$, then the first rule applies and $D$ passes the test operation. If any row in dataset $D$ does not satisfy the predicate test $\varphi$, then the second rule applies and the NULL character $\bot$ is returned.

\textbf{ASSIGNMENT ($:=$)}. The assignment statement is associated with three semantic rules: 
    \begin{displaymath}
        \begin{split}
            & \begin{prooftree}
                \hypo{\langle R, \kappa \rangle \rightarrow R'}
                \infer1{ \langle x := R, \kappa \rangle \rightarrow \langle x:= R', \kappa \rangle}
            \end{prooftree} \\
            & \, \\
            & \begin{prooftree}
                \hypo{$\,$}
                \infer1{ \langle x := D, \kappa \rangle \rightarrow \langle \top, \kappa [D/x] \rangle}
            \end{prooftree} \\
            & \, \\
            & \begin{prooftree}
                \hypo{$\,$}
                \infer1{ \langle x := \bot, \kappa \rangle \rightarrow \langle \bot, \kappa \rangle}
            \end{prooftree}
        \end{split}
    \end{displaymath}
The first rule states that an assignment cannot occur until the relational expression $R$ has finished evaluation. The second rule states that if the result of evaluating $R$ is a dataset $D$, then the variable environment $\kappa$ should be updated with $x = D$, and the program should continue onto the next statement. The third rule states that if the evaluation of $R$ fails, then the assignment statement should fail immediately as well. 

The full semantics for both relational expressions and statements are given as small-step operational semantics in the corresponding columns of Figure ~\ref{fig:SEM-RAT}. Small-step semantics were chosen over their equivalent big-step representation for purposes of greater clarity.

\begin{figure*}
  \caption{Semantics for ImpRAT}
  \label{fig:SEM-RAT}
  \def\arraystretch{2.3}
  \begin{tabular}{ll|l}
    \toprule

    \textbf{Relational Expressions} & & \textbf{Statements} \\


    \small
    \begin{prooftree}
        \hypo{$\,$}
        \infer1{\langle x, \kappa \rangle \rightarrow \kappa(x)}
    \end{prooftree} & &
      
    \small
    \begin{prooftree}
        \hypo{$\,$}
        \infer1{ \langle !x, \kappa \rangle \rightarrow \langle \top, \kappa \rangle}
    \end{prooftree} \\


    \small
    \begin{prooftree}
            \hypo{$\,$}
            \infer1{\langle D_1 \cup D_2 , \kappa \rangle \rightarrow D_3} 
    \end{prooftree} { \small (w/e $D_3$ is the union of relations)} & 

    \small
    \begin{prooftree}
            \hypo{$\,$}
            \infer1{\langle D_1 \cup D_2 , \kappa \rangle \rightarrow \bot} 
    \end{prooftree} { \small (if $D_1$ and $D_2$ are not union compatible)} & 

    \small
    \begin{prooftree}
        \hypo{$\,$}
        \infer1{ \langle x := D, \kappa \rangle \rightarrow \langle \top, \kappa [D/x] \rangle}
    \end{prooftree} \\


    \small
    \begin{prooftree}
        \hypo{$\,$}
        \infer1{\langle D_1 - D_2 , \kappa \rangle \rightarrow D_3} 
    \end{prooftree} {\small (w/e $D_3$ is the difference of relations)} &

    \small
    \begin{prooftree}
            \hypo{$\,$}
            \infer1{\langle D_1 - D_2 , \kappa \rangle \rightarrow \bot} 
        \end{prooftree} {\small (if $D_1$ and $D_2$ are not union compatible)} &

    \small
    \begin{prooftree}
        \hypo{$\,$}
        \infer1{ \langle x := \bot, \kappa \rangle \rightarrow \langle \bot, \kappa \rangle}
    \end{prooftree} \\


     \small
        \begin{prooftree}
            \hypo{$\,$}
            \infer1{\langle D_1 \times D_2 , \kappa \rangle \rightarrow D_3} 
        \end{prooftree} {\small (w/e $D_3$ is the product of relations)} & &

    \small
    \begin{prooftree}
        \hypo{$\,$}
        \infer1{ \langle \top; c_1, \kappa \rangle \rightarrow \langle c_1, \kappa \rangle}
    \end{prooftree} \\

    \small
    \begin{prooftree}
        \hypo{$\,$}
        \infer1{\langle D \{ \exists \varphi \} , \kappa \rangle \rightarrow D} 
    \end{prooftree} {\small (if any tuple in $D$ satisfies $\varphi$)} &

     \small
    \begin{prooftree}
        \hypo{$\,$}
        \infer1{\langle D \{ \exists \varphi \} , \kappa \rangle \rightarrow \bot} 
    \end{prooftree} {\small (if no tuple in $D$ satisfies $\varphi$)} &

    \small
    \begin{prooftree}
        \hypo{$\,$}
        \infer1{ \langle \bot; c_1, \kappa \rangle \rightarrow \langle \bot, \kappa \rangle}
    \end{prooftree} \\

    \small
    \begin{prooftree}
         \hypo{$\,$}
        \infer1{\langle D \{ \forall \varphi \} , \kappa \rangle \rightarrow D} 
    \end{prooftree} {\small (if all tuples in $D$ satisfy $\varphi$)} &

    \small
    \begin{prooftree}
        \hypo{$\,$}
        \infer1{\langle D \{ \forall \varphi \} , \kappa \rangle \rightarrow \bot} 
    \end{prooftree} {\small (if any tuples in $D$ do not satisfy $\varphi$)} &

    \small
    \begin{prooftree}
        \hypo{$\,$}
        \infer1{ \langle \Delta D; c_1, \kappa \rangle \rightarrow \langle \Delta D, \kappa \rangle}
    \end{prooftree} \\
    
    \small
    \begin{prooftree}
        \hypo{$\,$}
        \infer1{\langle D \{ a_1, ..., a_n \} , \kappa \rangle \rightarrow D} 
    \end{prooftree} {\small (if $D$ has attributes $a_1, ..., a_n$)} &

    \small
    \begin{prooftree}
        \hypo{$\,$}
        \infer1{\langle D \{ a_1, ..., a_n \} , \kappa \rangle \rightarrow \bot} 
    \end{prooftree} {\small (if $D$ lacks attributes $a_1, ..., a_n$)} &

    \small
    \begin{prooftree}
        \hypo{$\,$}
        \infer1{ \langle \Delta \bot,  \kappa \rangle \rightarrow \langle \bot, \kappa \rangle}
    \end{prooftree} \\
    
    \small
    \begin{prooftree}
        \hypo{ \langle D \{ a_1, ..., a_n \}, \kappa \rangle \rightarrow D }
        \infer1{\langle \Pi_{a_1, ..., a_n} (D) , \kappa \rangle \rightarrow D_1}
    \end{prooftree} { \small (w/e $D_1$ is a restriction to $\{ a_1, ..., a_n \}$)}  &

    \small
    \begin{prooftree}
        \hypo{ \langle D \{ a_1, ..., a_n \}, \kappa \rangle \rightarrow \bot }
        \infer1{\langle \Pi_{a_1, ..., a_n} (D) , \kappa \rangle \rightarrow \bot}
    \end{prooftree} &

    \small
    \begin{prooftree}
        \hypo{\langle R, \kappa \rangle \rightarrow R'}
        \infer1{ \langle \Delta R,  \kappa \rangle \rightarrow \langle \Delta R', \kappa \rangle}
    \end{prooftree} \\

    \small
    \begin{prooftree}
        \hypo{ \langle D \{ \exists \varphi \}, \kappa \rangle \rightarrow D }
        \infer1{\langle \sigma_{\varphi} (D) , \kappa \rangle \rightarrow D_1}
    \end{prooftree} { \small (w/e $D_1$ is the tuples of $D$ satisfying $\varphi$)} &

    \small
    \begin{prooftree}
        \hypo{ \langle D \{ \exists \varphi \}, \kappa \rangle \rightarrow \bot }
        \infer1{\langle \sigma_{\varphi} (D) , \kappa \rangle \rightarrow \bot}
    \end{prooftree} &

    \small
    \begin{prooftree}
        \hypo{\langle R, \kappa \rangle \rightarrow R'}
        \infer1{ \langle x := R, \kappa \rangle \rightarrow \langle x:= R', \kappa \rangle}
    \end{prooftree} \\

   
    \small
    \begin{prooftree}
        \hypo{ \langle D \{ b \}, \kappa \rangle \rightarrow D }
        \infer1{\langle \rho_{a / b} (D) , \kappa \rangle \rightarrow \bot}
    \end{prooftree} &

    \small
    \begin{prooftree}
        \hypo{ \langle D \{ a \}, \kappa \rangle \rightarrow \bot }
        \infer1{\langle \rho_{a / b} (D) , \kappa \rangle \rightarrow \bot}
    \end{prooftree} &
    
    \small
    \begin{prooftree}
        \hypo{\langle c_0, \kappa \rangle \rightarrow \langle c_0', \kappa' \rangle}
        \infer1{ \langle c_0; c_1, \kappa \rangle \rightarrow \langle c_0';c_1, \kappa' \rangle}
    \end{prooftree} \\


    \small
    \begin{prooftree}
        \hypo{ \langle D \{ a \}, \kappa \rangle \rightarrow D }
        \hypo{ \langle D \{ b \}, \kappa \rangle \rightarrow \bot }
        \infer2{\langle \rho_{a / b} (D) , \kappa \rangle \rightarrow D_1}
    \end{prooftree} {\small (w/e $D_1$ is $D$ renamed)} & \\


    \small
    \begin{prooftree}
        \hypo{ \langle D \{ T_0 \}, \kappa \rangle \rightarrow D}
        \infer1{\langle D \{ T_0, T_1, ..., T_n \} , \kappa \rangle \rightarrow D \{ T_1, ..., T_n \}}
    \end{prooftree} &

    \small
    \begin{prooftree}
        \hypo{ \langle D \{ T_0 \}, \kappa \rangle \rightarrow \bot}
        \infer1{\langle D \{ T_0, T_1, ..., T_n \} , \kappa \rangle \rightarrow \bot}
    \end{prooftree} & \\

  \bottomrule
\end{tabular}
\end{figure*}

\subsection{TQL: Syntax and Semantics}

By design, the TQL language is largely structured to serve as a surface syntax representation of the data discovery problem encoded within the algebraic language of \texttt{ImpRAT}. As such, TQL can be viewed as both a refined semantic subset of \texttt{ImpRAT}, restricted to avert syntactically correct but semantically nonsensical statements, and also an user-oriented extension of \texttt{ImpRAT}, with the inclusion of intuitive constructs and syntactic sugar. 

We provide the full syntax for TQL in BNF grammar through Figure ~\ref{fig:BNF-TQL}, and we define the semantics via translation to \texttt{ImpRAT} in Figure ~\ref{fig:SEM-TQL}. Specifically, we note that Figures ~\ref{fig:BNF-TQL} and ~\ref{fig:SEM-TQL} summarize definitions which are inductively constructed:
\begin{itemize}
    \item Figure ~\ref{fig:BNF-TQL} provides a constructive definition of TQL programs ($\texttt{prog}$) from statements ($\texttt{stmt}$) and relational expressions ($\texttt{expr}$). For example, the definition
    \begin{displaymath}
        \text{expr} ::= \text{expr} \text{ bop } \text{expr} {\:|\:} \text{expr} \texttt{[} \text{uop} \texttt{]} {\:|\:} \text{var} {\:|\:} \text{lit}
    \end{displaymath}
    states that a relational expression ($\texttt{expr}$) can be constructed either from the base cases of a dataset variable (\texttt{var}) or a dataset literal (\texttt{lit}), or inductively by applying binary and unary operators (\texttt{bop} and \texttt{uop}) to already-valid expressions.   
    \item Figure ~\ref{fig:SEM-TQL} provides a constructive definition of the translation function $\lBrack - \rBrack$, which maps a TQL program into its corresponding program in \texttt{ImpRAT}. For example, the translation
    \begin{displaymath}
        \lBrack \text{var} \texttt{:\{} \text{tp} \texttt{\}} \texttt{ = } \text{expr} \rBrack \triangleq x_{\text{var}} := \lBrack \text{expr} \rBrack; x_{\text{var}} \{ \lBrack \text{tp} \rBrack \}
    \end{displaymath}
    states that the syntactic construction $\lBrack \text{var} \texttt{:\{} \text{tp} \texttt{\}} \texttt{ = } \text{expr} \rBrack$ in TQL can be translated into the sequential statements $x_{\text{var}} := \lBrack \text{expr} \rBrack$ and $x_{\text{var}} \{ \lBrack \text{tp} \rBrack \}$ in \texttt{ImpRAT}, where $\lBrack \text{expr} \rBrack$ and $\lBrack \text{tp} \rBrack$ are translations for the sub-constructions of the initial syntactic construct. 
\end{itemize}
We now highlight a few elements of interest: 

Syntactically, the language constructs for TQL shown in Figure ~\ref{fig:BNF-TQL} were chosen to echo the user's intuition and to follow common practices existing across programming languages design. The set operations of relational algebra, UNION ($\cup$), DIFFERENCE ($-$) and CARTESIAN PRODUCT ($\times$), are encoded using the typical characters for computer arithmetic (\texttt{+}, \texttt{-}, \texttt{*}), matching the arithmetic conventions used for these set-theoretic operations. Comparator syntax for the atomic normal form propositions similarly follows the typical syntax for logical operands (e.g. \texttt{==}, \texttt{>=}, \texttt{||}, \texttt{\&\&}). Additionally, the brace notation (var \texttt{:\{} tp \texttt{\}}) used for type "checks" in TQL adopts similar style to the system of type annotations of the functional programming language OCaml. 

Semantically, we can view a program $\lBrack \text{prog} \rBrack$ in TQL as corresponding to a program $S$ in \texttt{ImpRAT}. To form a tuple $(S, \mathcal{D})$ corresponding to a complete data discovery program, we consider the set $\lBrack \mathcal{D} \rBrack$ of possible initial datasets to be the data repository associated with a query engine implementation. 

\begin{figure}
    \caption{Syntax for TQL}
    \label{fig:BNF-TQL}
    \begin{tabular}{c}
        \toprule

        \vbox{
            \begin{flalign*}
                \text{expr} &::= \text{expr} \text{ bop } \text{expr} {\:|\:} \text{expr} \texttt{[} \text{uop} \texttt{]} {\:|\:} \text{var} {\:|\:} \text{lit} \\
                \text{bop} &::= \texttt{+} {\:|\:} \texttt{-} {\:|\:} \texttt{*} \\
                \text{uop} &::= \text{attr} \texttt{->} \text{attr} {\:|\:} \text{pred} {\:|\:} \text{proj} \\
                \text{proj} &::= \text{attr} {\:|\:} \text{attr} \texttt{;} \text{proj} \\
                \text{pred} &::= \text{attr} \text{ cmp } \text{attr} {\:|\:} \text{attr} \text{ cmp } \text{val} {\:|\:} \texttt{!} \text{pred} \\
                &{\:|\:} \text{pred} \texttt{ \&\& } \text{pred} {\:|\:} \text{pred} \texttt{ || } \text{pred} \\
                \text{cmp} &::= \texttt{==} {\:|\:} \texttt{!=} {\:|\:} \texttt{>} {\:|\:} \texttt{>=} {\:|\:} \texttt{<} {\:|\:} \texttt{<=} \\
                \text{prog} &::= \text{stmt} {\:|\:} \text{stmt} \texttt{;} \text{prog} \\
                \text{stmt} &::= \text{var} {\:|\:} \text{var} \texttt{:\{} \text{tp} \texttt{\}} {\:|\:} \text{var} \texttt{ = } \text{expr} {\:|\:} \text{var} \texttt{:\{} \text{tp} \texttt{\}} \texttt{ = } \text{expr} \\
                &{\:|\:} \texttt{return } \text{expr} \\
                \text{tp} &::= \text{prp} {\:|\:} \text{prp} \texttt{;} \text{tp} \\
                \text{prp} &::= \texttt{\textbackslash /(} \text{pred} \texttt{)} {\:|\:} \texttt{/\textbackslash (} \text{pred}
                \texttt{)} {\:|\:} \texttt{[} \text{attr} \texttt{]}
            \end{flalign*}
        } \\

        \vbox{ 
            \begin{equation*}
                \text{var} ::= \text{string}
            \end{equation*}
            \begin{equation*}
                \text{attr} ::= \text{string literal(\texttt{'})} 
            \end{equation*}
            \begin{equation*}
                \text{lit} ::= \text{string literal(\texttt{"})} 
            \end{equation*}
            \begin{equation*}
                \text{val} ::= \text{floating-point numbers} 
            \end{equation*}
        } \\ 
            
        \bottomrule
    \end{tabular}
\end{figure}

\begin{figure}
  \caption{Semantic Translations for TQL}
  \label{fig:SEM-TQL}
  \def\arraystretch{1.5}
  \begin{tabular}{l}
    \toprule

    \vbox{
        \begin{flalign*}
            & \lBrack \text{stmt} \texttt{;} \text{prog} \rBrack \triangleq \lBrack \text{stmt} \rBrack ; \lBrack \text{prog} \rBrack \\
            & \lBrack \text{var} \rBrack \triangleq !x_{\text{var}} \\
            & \lBrack \text{var} \texttt{:\{} \text{tp} \texttt{\}} \rBrack \triangleq !x_{\text{var}}; x_{\text{var}} \{ \lBrack \text{tp} \rBrack \} \\
            & \lBrack \text{var} \texttt{ = } \text{expr} \rBrack \triangleq x_{\text{var}} := \lBrack \text{expr} \rBrack \\
            & \lBrack \text{var} \texttt{:\{} \text{tp} \texttt{\}} \texttt{ = } \text{expr} \rBrack \triangleq x_{\text{var}} := \lBrack \text{expr} \rBrack; x_{\text{var}} \{ \lBrack \text{tp} \rBrack \} \\
            & \lBrack \texttt{return } \text{expr} \rBrack \triangleq \Delta \lBrack \text{expr} \rBrack \\
            & \lBrack \text{expr} \text{ bop } \text{expr} \rBrack \triangleq \lBrack \text{expr} \rBrack \oplus \lBrack \text{expr} \rBrack \\
            & \lBrack \text{expr} \texttt{[} {\text{attr0} \texttt{->} \text{attr1}} \texttt{]} \rBrack \triangleq \rho_{a_0/a_1} (\lBrack \text{expr} \rBrack) \\
            & \lBrack \text{expr} \texttt{[} {\text{pred}} \texttt{]} \rBrack \triangleq \sigma_{\lBrack {\text{pred}} \rBrack} (\lBrack \text{expr} \rBrack) \\
            & \lBrack \text{expr} \texttt{[} {\text{attr1;...;attrN}} \texttt{]} \rBrack \triangleq \Pi_{a_1, ..., a_N} (\lBrack \text{expr} \rBrack) \\
            & \lBrack \text{prp} \texttt{; ... ;} \text{prp} \rBrack \triangleq \lBrack \text{prp} \rBrack, ..., \lBrack \text{prp} \rBrack \\
            & \lBrack \texttt{\textbackslash /(} \text{pred} \texttt{)} \rBrack \triangleq \exists \lBrack \text{pred} \rBrack \\
            & \lBrack \texttt{/\textbackslash (} \text{pred} \texttt{)} \rBrack \triangleq \forall \lBrack \text{pred} \rBrack \\
            & \lBrack \texttt{[} \text{attrN} \texttt{]} \rBrack \triangleq a_N \\
            & \lBrack \text{pred} \rBrack \triangleq \varphi
        \end{flalign*}
    } \\ 

  \bottomrule
\end{tabular}
\end{figure}

\section{Implementation Architecture}

In this section, we give an overview of the implementation architecture of our prototype TQL query engine, which solves the data discovery problem by computing a valid output on the TQL program representation of an \texttt{ImpRAT} expression. Since this problem is NP-Hard, our implementation focuses on producing a modular and generic prototype solver that can easily be modified and extended as needed. Our implementation can be found in the Appendix.

Recall from our discussion in Section 2 that the data discovery problem can be represented as a program formed from a tuple $(S, \mathcal{D})$, where $S$ is a program in \texttt{ImpRAT}, and $\mathcal{D}$ is our data repository of interest. Our architecture (see Figure ~\ref{fig:architecture}) consists of two main component groups corresponding to each element of this tuple. 

\begin{figure}
    \caption{Query Engine System Architecture}
    \label{fig:architecture}
    \begin{center}
        \includegraphics[width=\columnwidth]{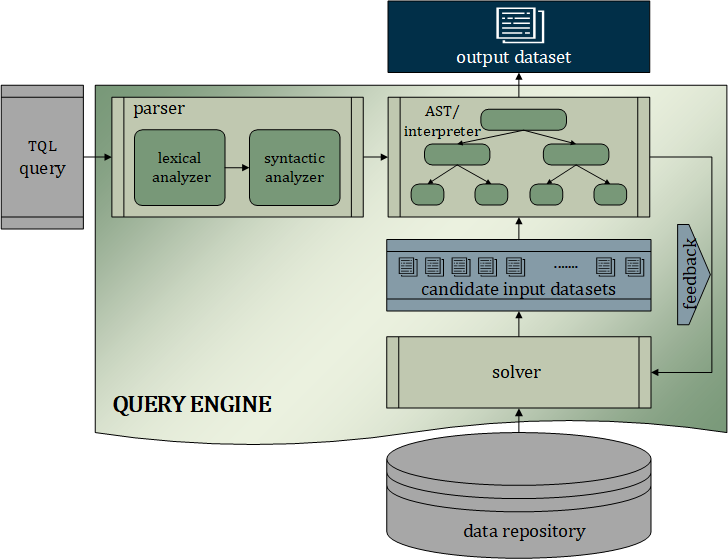}
    \end{center}
\end{figure}

\subsection{Parser-Interpreter}

We implement the lexical and syntactic analyzers for our TQL parser in Python using the Python Lex-Yacc (PLY) compilation tool developed by David M. Beazly \cite{ply}. The lex-yacc specifications employed for tokens and parsing is given through our BNF representation of TQL syntax in Figure ~\ref{fig:BNF-TQL}. 

The parser we build through PLY constructs the AST by a recursive process. Starting from primitive language constructs, the parser assembles complex TQL programs by converting each construct into an appropriate object in the \texttt{Node} superclass, and linking those objects into the desired tree structure. Interpretation and static analysis are both conducted on the tree via recursive calls to overloaded methods on the nodes of a tree structure. 

\subsection{Query Engine Solver}

At a high-level, the solver for our TQL query engine works -- taking inspiration from nondeterministic programming algorithms \cite{Abelson:structure} -- by repeatedly attempting to choose a candidate choice of input datasets for which a valid output dataset can be produced on the given query. More specifically, given an arbitrary data discovery query, the TQL query engine performs the following steps:
\begin{enumerate}
    \item The query engine calls a choice function -- a procedure which finds a candidate input on which the TQL statement might successfully evaluate. 
    \item The choice function uses static analysis on the AST and feedback from previous failed inputs (if any exist) to attempt to produce a valid candidate input.
    \item The candidate input is run on the AST.
    \item If the input succeeds, return the resultant dataset. Otherwise, repeat these steps with additional feedback for the choice function.
\end{enumerate}
 The intuition is to find a valid output dataset by enumerating over results produced by candidate inputs, leveraging feedback from failed inputs to achieve performance gains over naive enumeration. This procedure is given formally in Algo. ~\ref{alg:main}.

\begin{algorithm}
	\caption{Query Engine Algorithm}
	\label{alg:main}
	\begin{algorithmic}[1]
			\Require valid discovery-tuple $(S, \mathcal{D})$
			\Procedure{Solve}{$S, \mathcal{D}$}
                \State $F \gets \emptyset$ \Comment{Initialize feedback $F$}
                \While{true}
                    \State $I \gets \textsc{Choice}(S, \mathcal{D}, F)$ \Comment{Choose input $I$}
                    \State $O \gets \textsc{Run}(S, I)$ \Comment{Run program on $I$}
                    \If{$O \neq \bot$}
                        \State \Return $O$
                    \Else
                        \State $F \gets F \cup \{ I \}$
                    \EndIf
                \EndWhile
            \EndProcedure
	\end{algorithmic}
\end{algorithm}

For the implementation provided with this paper, we choose a classical backtracking algorithm as our choice function \cite{Abelson:structure}. We give it formally as Algo.~\ref{alg:choice}.

\begin{algorithm}
	\caption{Choice Function via Backtracking}
	\label{alg:choice}
	\begin{algorithmic}[1]
            \Require choice tuple $(S, \mathcal{D}, F)$
			\Function{Choice}{$S, \mathcal{D}, F$}
                \State $n \gets |S|$ \Comment{Compute number of inputs to $S$}
                \State $I \gets \emptyset$ \Comment{Initialize empty multiset $I$}
                \State \textbf{yield} \textsc{Backtrack}$(\mathcal{D}, n, I)$
            \EndFunction
            \Statex \vspace{-6pt}\hrulefill
			\Require search tuple $(\mathcal{D}, n, I)$
			\Function{Backtrack}{$\mathcal{D}, n, I$}
            \If{$|I| = n$}
                \State \Return $I$
            \EndIf
            \For{$D \in \mathcal{D}$}
                \State $I \gets I \cup \{ D\}$
                \State \textbf{yield} \textsc{Backtrack}$(\mathcal{D}, n, I)$
            \EndFor
            \EndFunction
	\end{algorithmic}
\end{algorithm}

\section{Next Steps}

This section highlights some ongoing work on the TQL system research project, and also discusses a few other potential research directions. 

\subsection{Choice Functions}
There are a multitude of approaches to constructing reasonable candidate choice functions. For example, it may be possible to develop probabilistic algorithms or to train machine learning models that exhibit strong performance in the general-case. 

Continuing the theme of this paper, however, one possible augmentation to the backtracking algorithm from Section 3 (see Algo ~\ref{alg:choice}) is to leverage type inference techniques from programming languages research to prune the search space of possible candidate inputs before interpreter evaluation. 

\subsection{Language Extensions}

We briefly outline a few potential language extensions for TQL:
\begin{itemize}
    \item \textbf{Advanced Join Operators}. TQL does include all primitive relational algebra operations, and can therefore simulate basic join and join-like operations. Including syntactic sugar would create a more accessible toolkit for the user. 
    \item \textbf{Complex Constraints}. The type constraints available in the TQL core language correspond to basic properties which are easy to verify within relational algebra. The introduction of additional constraints could greatly increase the practical expressiveness of the language. 
    \item \textbf{User Interface for Human Interaction}. Developing a user interface to interact with the discovery programming language can facilitate exploratory analysis, improve debugging workflows, and enable broader adoption for task-driven data discovery.

\end{itemize}





\section{Related Work}

In this section, we explore some related work on data discovery and on applications of programming languages research to domain-specific purposes. 

\textbf{Data Discovery Systems}. Much research has been conducted on the design and development of data discovery systems \cite{Paton:survey, huang:semiring, behme:fainder, gong:nexus2024, Galhotra:metam, Fernandez:aurum, tao:kyrix, cong:WrapGate}. Such research has included work on the development of efficient algorithms for a variety of critical data discovery problems, and implementing systems have spanned both language-based and more visually-oriented interfaces  \cite{Galhotra:metam, Fernandez:aurum, tao:kyrix}. We believe that TQL is the first general-purpose data discovery system to provide a domain-specific language with carefully formalized semantics that leverage modern programming languages results. 

\textbf{Algebraic Language Extensions}. Prior work in algebraic language extensions, such as Kleene Algebra with Tests (KAT) \cite{kozen:kat}, has explored the incorporation of boolean logic into existing algebraic systems. Such extended systems are valuable in a variety of contexts, including as semantic foundations for network programming languages \cite{anderson:netkat}, and for reasoning about imperative programs \cite{steffen:gkat}. The semantic foundations of TQL in \texttt{RAT} and \texttt{ImpRAT}, however, differ from systems like KAT in that they take the view of encoding predicate logic into types, rather than terms and control flow. 

\textbf{Type-driven Language Features}. Significant research in programming languages literature covers applications of type-based programming languages techniques to many other domains, including databases \cite{silva:typedb, Ohori:standardml, Caldwell:typechecking}. In most such research \cite{silva:typedb, Ohori:standardml, Caldwell:typechecking}, however, the primary focus of the type system is in providing formal guarantees of correctness. Although research into dependent type systems \cite{Pierce:advtype} has similar theoretical flavor to the proposed type system for TQL, there is no preceding work in which a type system was used to characterize the search space and drive data discovery.

\section{Conclusion}

This paper presents a novel general-purpose data discovery system through TQL, an extensible domain-specific language for data discovery designed to incorporate and extend techniques from modern programming languages research. In addition to syntax and semantics given via translation into the algebraic \texttt{ImpRAT} model, we also implement a generic and modular proof-of-concept solver for our language that can be easily extended to integrate both future language features and improved search algorithms.
\vspace{-3mm}


\bibliographystyle{ACM-Reference-Format}
\bibliography{sample-base}

\appendix
\vspace{-5mm}
\section{Data Availability}
The current version of the implementation can be found in the following GitHub repository: \url{https://github.com/andrew-kang/tql}.
\end{document}